\documentclass{elsarticle}

\usepackage{lineno,hyperref}
\modulolinenumbers[5]

\usepackage[american]{babel}	%English hyphenation etc.
\usepackage[utf8]{inputenc}		%Input encoding
\usepackage[T1]{fontenc}		%Font encoding
\usepackage[english=american]{csquotes}	%Quotation marks ' '
\usepackage{graphicx}			%Include graphics
\usepackage{epstopdf}
\usepackage{subfig}				%Subfigures
\usepackage{marvosym}			%Special symbols
\usepackage{amsmath}			%Mathmode
\usepackage{amssymb}
\usepackage{amsthm}
\usepackage{mathtools}
\usepackage{color}

%Definitionen, Beispiele und Bemerkungen
\theoremstyle{definition}
\newtheorem{definition}{Definition}%[subsection]
\newtheorem{example}{Example}%[subsection]

\newtheorem*{remark}{Remark}

%Sätze
\theoremstyle{plain}

\newtheorem{theorem}{Theorem}%[section]
\newtheorem{proposition}{Proposition}

\begin{document}

\begin{frontmatter}
\title{Comparing the rankings obtained from two biodiversity indices: the Fair Proportion Index and the Shapley Value}

\author{Kristina Wicke}
\ead{kristina.wicke@uni-greifswald.de}

\author{Mareike Fischer \corref{cor1}}
\ead{email@mareikefischer.de}
\cortext[cor1]{Corresponding author}

\address{Department of Mathematics and Computer Science, University Greifswald, Greifswald, Germany}

\begin{abstract}
The Shapley Value and the Fair Proportion Index of phylogenetic trees have been frequently discussed as prioritization tools in conservation biology. Both indices rank species according to their contribution to total phylogenetic diversity, allowing for a simple conservation criterion. While both indices have their specific advantages and drawbacks, it has recently been shown that both values are closely related. However, as different authors use different definitions of the Shapley Value, the specific degree of relatedness depends on the specific version of the Shapley Value -- it ranges from a high correlation index to equality of the indices. 
In this note, we first give an overview of the different indices. Then we turn our attention to the mere ranking order provided by either of the indices.
We compare the rankings obtained from different versions of the Shapley Value for a phylogenetic tree of European amphibians and illustrate their differences. We then undertake further analyses on simulated data and show that even though the chance of two rankings being exactly identical (when obtained from different versions of the Shapley Value) decreases with an increasing number of taxa, the distance between the two rankings converges to zero, i.e., the rankings are becoming more and more alike. Moreover, we introduce our freely available software package FairShapley, which was implemented in Perl and with which all calculations have been performed.
\end{abstract}

\begin{keyword}
Phylogenetic diversity \sep Shapley Value \sep Fair Proportion Index \sep Ranking order \sep Ultrametric \sep Computation
\end{keyword}

\end{frontmatter}

\section{Introduction}
\label{intro}
Due to limited financial means, biodiversity conservation programs often need to prioritize the species to conserve. Two indices used in this matter are the Shapley Value and the Fair Proportion Index. Both are based on phylogenetic trees and rank species according to their contribution to overall biodiversity.

The Shapley Value was first introduced by \citep{Haake_2007} for unrooted trees and reflects the average biodiversity contribution of a species. The Fair Proportion Index, on the other hand, lacks a biological link to conservation, but is significantly easier to calculate and has been preferred in practice. Under a different name (ED for \emph{Evolutionary Distinctiveness}) the Fair Proportion Index has for example been used in the `EDGE of Existence' Project, established by the \emph{Zoological Society of London} in 2007 (see \citep{edge}). 
However, \citep{Hartmann2013} observed a strong correlation between the Shapley Value and the Fair Proportion Index on rooted trees, where the Shapley Value was calculated for the unrooted version of the tree by suppressing the root vertex. Very recently, \citep{Fuchs2015} have extended the concept of the Shapley Value to rooted trees and have shown that the two indices are identical for these trees. They also introduced a slightly modified version of the Shapley Value, which again is highly correlated to the Fair Proportion Index. 

In this note we first give an overview of the various versions of the Shapley Value and their respective relatedness with the Fair Proportion Index, before we focus on the mere ranking order of taxa obtained from different versions of the indices. Although the indices are highly correlated, they can result in different ranking orders, especially when the trees become large. We will show with a simulation study based on random trees that in fact, despite the increasing correlation as the number of species grows, different ranking orders are still more likely than equal ones. Therefore, in order to demonstrate what the correlation really implies, we treat the ranking lists as vectors and use the so-called Manhattan distance to measure the difference between two rankings suggested by different indices. We then show that the distance between these rankings tends to 0 as the number of species grows.

All calculations in this manuscript were performed using our new software tool FairShapley, which has been made publicly available at \\ http://www.mareikefischer.de/Software/FairShapley.zip. \\ This tool, which was implemented in Perl, is able to calculate all versions of the Shapley Value as well as the Fair Proportion Index as explained in this paper.

\section{Preliminaries}
\label{sec:1}
Before we can present our results, we need to introduce some notation and definitions. Recall that a phylogenetic tree is a connected, acyclic graph, where the leaves are bijectively labelled by some set $X$ of species, which are also often called taxa.
A \emph{rooted} phylogenetic tree is a phylogenetic tree with a designated root node $\rho$. In biology, \emph{binary} phylogenetic trees are of particular importance. A phylogenetic tree is called \emph{unrooted binary} if all internal nodes have degree 3. It is called \emph{rooted binary} if all internal nodes have degree 3 except for one specified root node $\rho$ of degree 2.
Throughout this paper, we always specify whether we are referring to rooted or unrooted trees. When we write $T^u$, this notation refers to an unrooted  phylogenetic tree, whereas $T^r$ always refers to a rooted phylogenetic tree. In both cases, when we refer to the size of a tree, we mean the number $n = |X|$ of taxa, i.e., the number of leaves of the tree under consideration.
Note that a rooted tree can also be turned into an unrooted tree by abolishing the designation of a specified root node. In case of binary phylogenetic trees, a rooted tree can be turned into an unrooted tree by suppressing the root node $\rho$, i.e., by deleting $\rho$ and the two edges adjacent to it and re-connecting the two resulting degree-2 vertices with a new edge. We subsequently elaborate how turning a rooted tree into an unrooted one can change the various diversity indices.

In biodiversity conservation, the {\em phylogenetic diversity} of a set of species plays an important role. This concept captures how diverse or different a set of species is. Mathematically, this requires the trees under consideration to come with edge lengths (e.g., representing evolutionary time since the last common ancestor or substitution rate). Therefore, we assume all edges in the trees to have positive edge lengths assigned to them, and we denote the length of an edge $e$ as $\lambda_e$. Moreover, recall that a rooted tree is called {\em ultrametric} if the path lengths from all leaves to the root are identical. Here, the path lengths are calculated as the sum of all edge lengths on the path from a leaf to the root. The concept of ultrametric trees is also often referred to as the {\em molecular clock hypothesis} in biology.
Note, however, that throughout this paper we do not assume ultrametricity unless stated otherwise.

We are now in the position to formally define phylogenetic diversity, or $PD$ for short. 

\begin{definition} The phylogenetic diversity ($PD$) of a phylogenetic tree is defined as follows:
\begin{enumerate} \item For a rooted phylogenetic tree $T^r$ with leaf set $X$, the $PD^r$ of a subset $S \subseteq X$ of taxa is calculated by summing up the edge lengths of the phylogenetic subtree of $T^r$ containing $S$ and the root (i.e., the sum of branch lengths in the smallest spanning tree in $T^r$ containing $S$ and the root). Thus, the $PD$ of a single taxon is the length of the path from the root to the leaf representing this taxon. 
\item In case of an unrooted phylogenetic tree $T^u$, the unrooted phylogenetic diversity, $PD^u$, of a subset $S \subseteq X$ of taxa is defined as the sum of edge lengths in the minimal spanning tree in $T^u$ connecting those leaves. The $PD$ of a single taxon is defined as 0.
\end{enumerate}
\end{definition}

\noindent Note that in an ultrametric tree, all taxa have the same $PD^r$, and note that if one considers the unrooted version $T^u$ of a rooted tree $T^r$, the $PD$ may decrease due to the different definitions.

\begin{example} \label{ex1} Consider Figure \ref{fig:1}, which depicts trees $T^r$ and $T^u$ on taxon set $X=\{A,B,C,D\}$. Note that here, $T^u$ is the tree you get by suppressing the root of $T^r$. Now consider the highlighted subset $S=\{A,B\}\subseteq X$. The phylogenetic diversity of $S$ can be calculated as follows: $PD^r(S)=1+1+1+1=4$, and $PD^u(S)=1+1=2$. The difference between the two definitions of diversity can be explained by the path of length 2 connecting $S$ with the root, which is disregarded in the unrooted case.  \end{example}

\begin{figure}[htbp]
	\centering
	\includegraphics[scale=0.65]{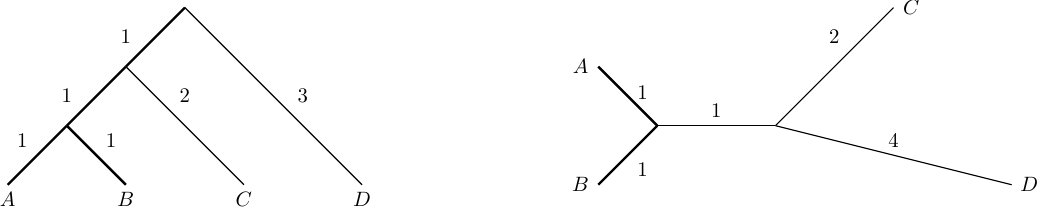}
 	\caption{Rooted tree $T^r$ with leaf set $\{A,B,C,D\}$ and unrooted tree $T^u$, which is derived from $T^r$ by suppressing the root node.}
  \label{fig:1}
\end{figure}

\noindent
One more concept we need before we can turn our attention to diversity prioritization indices is the concept of a \emph{ranking}. Here, a ranking $r$ is just an assignment of ranking numbers to the elements of $X$, where for any pair of taxa $x,y \in X$, $x$ either receives a higher or lower ranking number than $y$ or the ranking numbers of $x$ and $y$ are equal (we then call $x$ and $y$ tied). 
We say that a function $f:X \rightarrow \mathbb{R}$ induces a ranking $r_f$ if the ranking number of $x$ in $r_f$ is smaller than the ranking number of $y$ precisely if $f(x) > f(y)$. If $f(x)=f(y)$ for some $x \neq y$, $x$ and $y$ receive the same ranking number.

\begin{example} Let $X=\{A,B,C,D\}$. Let $f(A)=0.5$, $f(B)=3$, $f(C)=0.2$ and $f(D)=1.5$. Then the induced ranking is $r_f(A,B,C,D) = (3,1,4,2)$.
Now let $g(A)=0.5$, $g(B)=0.5$, $g(C)=0.2$ and $g(D) = 1.5$. Then we retrieve the induced ranking $r_g(A,B,C,D)=(2,2,4,1)$, where $A$ and $B$ are tied.
\end{example}

Next, recall that the so-called Manhattan distance $d_1$ (also known as $L_1$ distance or $l_1$ metric) between two vectors $r,s \in \mathbb{R}^n$ is defined as follows: 
\begin{linenomath*}
$$ d_1(r, s) = \|r - s \| = \sum_{i=1}^{n} \lvert r(i) - s(i) \rvert .$$
\end{linenomath*}
We will later on use the Manhattan distance to measure the difference between two rankings induced by different biodiversity indices. Notice that for comparing rankings, often the so-called Kendall tau distance is used. The Kendall tau distance counts the number of pairwise disagreements between two rankings, but can only deal with total rankings, i.e. rankings without ties. As rankings obtained by different biodiversity indices may include ties, we use the Manhattan distance instead (Comparisons where the Kendall tau distance is used by breaking ties arbitrarily can be found in the supporting information (S1 Text)). \\
However, since we want to observe the behavior of the different prioritization indices for increasing numbers of taxa, we need to normalize the calculated distances. This is due to the fact that whenever the number of taxa increases, even small differences between two rankings have a higher impact on the distance. 
%For example, if you compare the rankings $(1,2)$ and $(2,1)$, their Manhattan distance is $d_1=1+1=2$, and it is maximal in the sense that the two rankings could not be any more different. However, the Manhattan distance of the rankings $(1,2,3,4,5)$ and $(1,2,5,4,3)$ is $d_1= 0+0+2+0+2=4$, and thus higher than the distance between the two rankings mentioned first, even though here only two out of five taxa swapped their places in the ranking. 
So we need to normalize in order to take into account that whenever the number of taxa increases, the maximum possible Manhattan distance increases, too. So we divide exactly by this factor.
Thus, we define the {\em normalized Manhattan distance} $d_1^*(r_1,r_2)$ for two rankings $r_1$ and $r_2$ with associated ranking vectors $v_{r_1}$ and $v_{r_2}$ as follows: 
\begin{linenomath*}
$$ d_1^*(r_1,r_2) \coloneqq \frac{d_1(v_{r_1},v_{r_2})}{\displaystyle \max_{r',s'} \text{ } d_1(v_{r'},v_{s'})}. $$
\end{linenomath*}
Note that the maximum in the denominator is obtained when $r'=(1,2, \ldots, n)$ and $s'=(n,n-1,\ldots,1)$. \\

Now we are in a position to introduce the biodiversity indices, which we will analyze in the following.

\subsection{Various indices for biodiversity conservation}

In this section, we will present and analyze some indices for biodiversity conservation, which have recently been discussed in the literature. All of these indices turn out to be related, but as different authors use different definitions of these indices, their results sometimes differ. We will therefore give an overview about the relationships of the various definitions.

The first index we want to introduce is the {\em Fair Proportion Index}, which is only defined for rooted trees.

\begin{definition}[Fair Proportion Index]
For a rooted phylogenetic tree $T^r$ with leaf set $X$ the Fair Proportion Index of a taxon $a$ is defined as
	\begin{linenomath*}	
	\begin{equation} 
	FP_{T^r}(a) = \sum_{e} \frac{\lambda_{e}}{D_{e}}, \label{def:1}
	\end{equation}
	\end{linenomath*}
where the sum runs over all edges $e$ on the path from $a$ to the root and $D_e$ denotes the number of leaves descendent from that edge.
\end{definition}

It can be easily shown that the sum of all Fair Proportion Indices for a given species set $X$ equals the total branch length of the given tree.

\begin{example} In Figure \ref{fig:1}, $FP_{T^r}(A)=\frac{1}{1}+\frac{1}{2}+\frac{1}{3}=\frac{11}{6}$, $FP_{T^r}(B)=\frac{1}{1}+\frac{1}{2}+\frac{1}{3}=\frac{11}{6}$, $FP_{T^r}( C)=\frac{2}{1}+\frac{1}{3}=\frac{7}{3}$ and $FP_{T^r}(D)=\frac{3}{1}=3$. Altogether, we have $FP_{T^r}(A)+ FP_{T^r}(B)+FP_{T^r}( C )+FP_{T^r}(d)= 9$, which equals the total sum of all branch lengths in $T^r$. Moreover, the ranking induced by the Fair Proportion Index in this case is $r_{FP}(A,B,C,D)=(2,2,4,1)$.
\end{example}

As has been shown in the previous example, the Fair Proportion Index can easily be calculated. However, it does not have a direct biological justification. Therefore, another index from evolutionary game theory was proposed and adjusted to phylogenetic conservation, namely the so-called Shapley Value \citep{Haake_2007, Hartmann2013, Fuchs2015}. However, as various authors use slightly different versions of this index, we present three different definitions here, the first of which we call the original Shapley Value, which can be defined both for rooted and unrooted trees.

\begin{definition}[Original Shapley Value]
Let $T^r$ be a rooted phylogenetic tree with leaf set $X$ and let $PD^r(S)$ denote the phylogenetic diversity of $S \subseteq X$. Then the Shapley Value for a taxon $a \in X$ is defined as
\begin{linenomath*}
\begin{equation}
	 \label{def:2.1}
	SV_{T^r}(a) = \frac{1}{n!}\sum_{\substack{S \subseteq X \\ a \in S}} \Big( (\lvert S \rvert -1)!(n- \lvert S \rvert)! 
	(PD^r(S)-PD^r(S \setminus \{a\})) \Big),
\end{equation}
\end{linenomath*}
where $n = \lvert X \rvert$ and $S$ denotes a subset of species containing taxon $a$ (also sometimes referred to as `coalition') and the sum runs over all such coalitions possible. \\
Similarly, for an unrooted tree $T^u$ with leaf set $X$ we have
	\begin{linenomath*} 
	\begin{equation}
	 \label{def:2.2}
	SV_{T^u}(a) = \frac{1}{n!}\sum_{\substack{S \subseteq X \\ a \in S}} \Big( (\lvert S \rvert -1)!(n- \lvert S \rvert)! (PD^u(S)-PD^u(S \setminus \{a\})) \Big).
	\end{equation}
	\end{linenomath*}
\end{definition}

Note that the definition of the original Shapley Value is basically the same for rooted and unrooted trees. The only difference is how the phylogenetic diversity of subsets is defined (i.e., $PD^r$ vs. $PD^u$). 
%For rooted trees, however, this value coincides with the Fair Proportion Index, i.e., $SV_{T^r} = FP_{T^r}$, which was recently shown by \citep{Fuchs2015}.
For rooted trees, however, the original Shapley Value and the Fair Proportion Index coincide:
\begin{theorem}[\citet{Fuchs2015}] \label{fuchs_theorem}
Let $T^r$ be a rooted phylogenetic tree with leaf set $X$. Then we have for all $a \in X:$
$$SV_{T^r}(a) = FP_{T^r}(a).$$
\end{theorem}

We now present an example for the original Shapley Value.

\begin{example} We calculate $SV_{T^r}(A)$ for the tree depicted in Figure \ref{fig:1} (a). Note that the possible subsets $S \subseteq X=\{A,B,C,D\}$ which contain $A$ are: $\{A\}$, $\{A,B\}$, $\{A,C\}$, $\{A,D\}$, $\{A,B,C\}$, $\{A,B,D\}$, $\{A,C,D\}$ and $\{A,B,C,D\}$. Thus, we have to consider 8 summands when calculating $SV_{T^r}(A)$: 
\begin{linenomath*}
\begin{align*}
SV_{T^r}(A)&= \frac{1}{4!}\sum_{S: A \in S} \Big( (\lvert S \rvert -1)!(\lvert X \rvert -\lvert S \rvert)! (PD^r(S)-PD^r(S\setminus \{A\})) \Big) \\
		   &= \frac{1}{4!} \Big[(1-1)!(4-1)!(3-0)  \\
		  	&\qquad {} + (2-1)!(4-2)! \big( (4-3)+(5-3)+(6-3) \big)  \\
		  	&\qquad {} + (3-1)!(4-3)! \big( (6-5)+(7-6)+(8-6) \big) \\
		  	&\qquad {} + (4-1)!(4-4)! (9-8) \Big] \\
		  &= \frac{11}{6}.
\end{align*}
\end{linenomath*}

Note that, as implied by Theorem \ref{fuchs_theorem} (cf. \citep{Fuchs2015}), this value coincides with $FP^r(A)$ as calculated above, but the calculation is much more involved. Moreover, a similar calculation yields $SV_{T^u}(A)=\frac{19}{12}$, which shows that the original Shapley Value for the rooted and unrooted versions of the tree depicted by Figure \ref{fig:1} differ due to the different underlying definitions of phylogenetic diversity.
\end{example}

%%%%%
Additionally to the original Shapley Value, \citep{Fuchs2015} also introduced a modified version of the Shapley Value, which we will call {\em modified Shapley Value} and which we denote by $\widetilde{SV}$ in the following. The difference between the original and the modified versions of the Shapley Value is that the first considers all subsets of taxa which contain a certain taxon, whereas the latter only considers subsets of size at least 2 (i.e., $\lvert S \rvert \geq 2$). 

\begin{definition} [Modified Shapley Value]
Let $T^r$ be a rooted and $T^u$ be an unrooted phylogenetic tree with leaf set $X$ and let $PD^r(S)$ and accordingly $PD^u(S)$ denote the phylogenetic diversity of a subset $S \subseteq X$. Then the modified Shapley Value for a taxon $a$ is defined as
	\begin{linenomath*}
	\begin{equation}
	\label{def:3.1}
	\widetilde{SV}_{T^r}(a) = \frac{1}{n!}\sum_{\substack{S: a \in S \\ \lvert S \rvert \geq 2}} \Big( (\lvert S \rvert -1)!(n- \lvert S \rvert)! (PD^r(S)-PD^r(S \setminus \{a\})) \Big) 
	\end{equation}
	\end{linenomath*}
	in the rooted case and 
	\begin{linenomath*}
	\begin{equation}
	\label{def:3.2}
	\widetilde{SV}_{T^u}(a) = \frac{1}{n!}\sum_{\substack{S: a \in S \\ \lvert S \rvert \geq 2}} \Big( (\lvert S \rvert -1)!(n- \lvert S \rvert)! (PD^u(S)-PD^u(S \setminus \{a\})) \Big)
	\end{equation}
	\end{linenomath*}
	in the unrooted case, where $n = \lvert X \rvert$ and the sum runs over all coalitions $S$ containing taxon $a$ and at least one other taxon.
\end{definition}

Comparing the original and the modified Shapley Value, we have the following relationships:
\begin{linenomath*} 
\begin{align}
	SV_{T^r}(a) &= \widetilde{SV}_{T^r}(a) + \frac{PD^r(\{a\})}{n} \label{eq:1} \text{ and } \\
	SV_{T^u}(a) &= \widetilde{SV}_{T^u}(a) + \frac{PD^u(\{a\})}{n} = \widetilde{SV}_{T^u}(a). \label{eq:2}
\end{align}
\end{linenomath*}
\begin{proof}
We first establish equation \eqref{eq:1}: \\
Consider the contribution of the singleton set $\{a\}$ to $SV_{T^r}(a)$.
By definition this is
\begin{align*}
\frac{1}{n!} \Big( (1-1)! \, (n-1)! \, (PD^r(\{a\}) - PD^r(\emptyset)) \Big) 
&= \frac{(n-1)!}{n!} \, PD^r(\{a\}) \\
&= \frac{PD^r(\{a\})}{n},
\end{align*}
where $PD^r(\{a\}) \neq 0$, because we are considering the rooted version of $PD$ on a rooted tree $T^r$ with positive edge lengths.
As $\widetilde{SV}_{T^r}(a)$ lacks this contribution it is
$$ \widetilde{SV}_{T^r}(a) = SV_{T^r}(a) - \frac{PD^r(\{a\})}{n},$$
or, in other words,
$$ SV_{T^r}(a) = \widetilde{SV}_{T^r}(a) + \frac{PD^r(\{a\})}{n}.$$ 
Equation \eqref{eq:2}, i.e. equality of the original and modified Shapley Value for unrooted trees, follows from \eqref{eq:1} and the fact that $PD^u(S)$ is defined as zero, whenever $S$ contains only one element. In our case $PD^u(\{a\})=0$ and thus,
$$ SV^u(a) = \widetilde{SV}_{T^u}(a) + \frac{0}{n} = \widetilde{SV}_{T^u}(a).$$
\end{proof}

\begin{remark}
In \citep{Fuchs2015}, further results on the expectation of the Fair Proportion Index (and thus, on the original Shapley Value) and the modified Shapley Value and on their correlation are established for random phylogenetic trees under the Yule-Harding model and the uniform model.
\end{remark}

\citep{Hartmann2013} also states a correlation result for the Fair Proportion Index and the Shapley Value, but does not go into the details of the definition of the Shapley Value that he uses. \noindent\citep{Fuchs2015} suggest that the modified Shapley Value was used in \citep{Hartmann2013}. We think, however, that \citep{Hartmann2013} used yet another version of the Shapley Value, namely the original Shapley Value of the unrooted tree derived from the original tree by suppressing the root node. The reason why we think so is that, while \citep{Hartmann2013} does not give a definition of phylogenetic diversity, he does state a definition of the Shapley Value, and the sum there ranges over all subsets containing a certain taxon, not only subsets of size at least 2. But he cannot be using the original Shapley Value for rooted trees, because otherwise his results would have led to the equality of the Shapley Value and the Fair Proportion Index rather than only a strong correlation. This reasoning is also supported by \citep[p. 142]{Book_Steel} and leads to yet another version of the Shapley Value, which we will call the {\em unrooted Shapley Value on rooted trees}, or {\em unrooted rooted Shapley Value} for short, and which we denote by $\widehat{SV}_{T^r}$. 

\begin{definition}
For a rooted phylogenetic tree $T^r$ with leaf set $X$ we retrieve the unrooted Shapley Value on rooted trees of a taxon $a$ as
\begin{linenomath*}
\begin{equation}
	\widehat{SV}_{T^r}(a) = SV_{T^u}(a),
\end{equation}
\end{linenomath*}
where $SV_{T^u}(a)$ is the original Shapley Value of $a$ in the corresponding unrooted tree $T^u$.
\end{definition}
Recall that turning a rooted tree $T^r$ into an unrooted tree $T^u$ causes a change in the definition of phylogenetic diversity (i.e., a shift from $PD^r$ to $PD^u$). Thus, the unrooted Shapley Value on a rooted tree does not necessarily coincide with the original Shapley Value on the rooted tree. The two indices are, however, highly correlated (see \citep{Hartmann2013}). However, note that this analysis is somewhat counterintuitive as the Fair Proportion Index is only defined for rooted trees, and in \citep{Hartmann2013}, only rooted trees are depicted, but for the Shapley Value still the unrooted version of the rooted tree seems to have been used.

However, just as the unrooted Shapley Value on a rooted tree needs not coincide with the original Shapley Value on the rooted tree, it does not necessarily agree with the modified Shapley Value on the rooted tree, but again, the indices are closely related.  \\

Summarizing the above, we have: 
\begin{itemize} \item $SV_{T^r} = FP_{T^r},$ 
\item $SV_{T^r} \neq \widetilde{SV}_{T^r}$, but $SV_{T^r}(a) = \widetilde{SV}_{T^r}(a) + \frac{PD^r(a)}{n}$ for all $a \in X,$
 \item  $SV_{T^r} \neq \widehat{SV}_{T^r}$, but there is a high correlation between the two values (\citet{Hartmann2013}), 
 \item $SV_{T^u} = \widehat{SV}_{T^r}$, where $T^u$ denotes the unrooted version of $T^r$;
 \item $\widehat{SV}_{T^r} \neq \widetilde{SV}_{T^r}$.
\end{itemize}

%Example
\begin{example} Consider again Figure \ref{fig:1}.  As mentioned above, we have  $FP_{T^r}(A) = SV_{T^r}(A) = \frac{11}{6}$. Moreover, we have $\widetilde{SV}_{T^r}(A)=\frac{13}{12}$. For $T^u$ we have $SV_{T^u}(A)=\frac{19}{12}$. \\
Note that $SV_{T^r}(A) \neq \widetilde{SV}_{T^r}(A) \neq SV_{T^u}(A) = \widehat{SV}_{T^r}(A)$.
\end{example}

\begin{remark} 
We finish this section with an observation on the size of the differences between the different versions of the Shapley Value. In fact, we show that these differences can be made arbitrarily large, but also arbitrarily small.
Consider the rooted phylogenetic tree $T_{\varepsilon}^r$ with leaf set $X=\{A,B,C,D\}$ and its unrooted version $T_{\varepsilon}^u$ depicted in Figure \ref{fig:2}. We now calculate the different versions of the Shapley Value for taxon $A$: 
	\begin{align*}
	SV_{T_{\varepsilon}^r}(A) &= \frac{1}{\varepsilon} + \frac{\varepsilon}{2}, \\
	\widetilde{SV}_{T_{\varepsilon}^r}(A) &= \frac{3}{4 \varepsilon} + \frac{\varepsilon}{4}, \\
	\widehat{SV}_{T_{\varepsilon}^r}(A) &=  \frac{3}{4 \varepsilon} + \frac{3 \varepsilon}{4}.
	\end{align*}
Now consider the difference between the original and the modified/unrooted rooted Shapley Value, respectively. For $\varepsilon \rightarrow 0$, these differences tend to infinity:
	\begin{align*}
	\left\vert SV_{T_{\varepsilon}^r}(A) - \widetilde{SV}_{T_{\varepsilon}^r}(A) \right\vert &= \left\vert \frac{1}{4 \varepsilon} + \frac{ \varepsilon}{4} \right\vert \overset{\varepsilon \rightarrow 0}{\longrightarrow} \, \infty, \\
	\left\vert SV_{T_{\varepsilon}^r}(A) - \widehat{SV}_{T_{\varepsilon}^r}(A)  \right\vert &= \left\vert \frac{1}{4 \varepsilon} - \frac{\varepsilon}{4} \right\vert \overset{\varepsilon \rightarrow 0}{\longrightarrow} \, \infty.
	\end{align*}
However, the difference between the unrooted rooted and the modified Shapley Value tends to zero for $\varepsilon \rightarrow 0$: 
	\begin{align*}
	\left\vert \widehat{SV}_{T_{\varepsilon}^r}(A) - \widetilde{SV}_{T_{\varepsilon}^r}(A) \right\vert &= \left\vert \frac{\varepsilon}{2} \right\vert \overset{\varepsilon \rightarrow 0}{\longrightarrow} \, 0.
	\end{align*}
Note that for $\varepsilon \rightarrow \, \infty$ all three differences tend to infinity, while for $\varepsilon \rightarrow 1$, the difference between the original and the unrooted rooted Shapley value of $A$ tends to zero:
\begin{align*}
\left\vert SV_{T_{\varepsilon}^r}(A) - \widehat{SV}_{T_{\varepsilon}^r}(A) \right\vert &= \left\vert \frac{1}{4 \varepsilon} - \frac{\varepsilon}{4} \right\vert \overset{\varepsilon \rightarrow 1}{\longrightarrow} \, 0.
\end{align*}
If we now consider taxon $B$, we have
	\begin{align*}
	SV_{T_{\varepsilon}^r}(B) &= \frac{3 \varepsilon}{2}, \\
	\widetilde{SV}_{T_{\varepsilon}^r}(B) &= \varepsilon, \\
	\widehat{SV}_{T_{\varepsilon}^r}(B) &=  \frac{1}{12 \varepsilon} + \frac{17 \varepsilon}{12}.
	\end{align*}
For $\varepsilon \rightarrow 0$ the difference between the original and the modified Shapley Value tends to zero, while both the difference between the original and the unrooted rooted Shapley Value and the difference between the unrooted rooted and the modified Shapley Value tend to infinity:
	\begin{align*}
	\left\vert SV_{T_{\varepsilon}^r}(B) - \widetilde{SV}_{T_{\varepsilon}^r}(B) \right\vert &= \left\vert \frac{\varepsilon}{2}  \right\vert \overset{\varepsilon \rightarrow 0}{\longrightarrow} \,  0, \\
	\left\vert SV_{T_{\varepsilon}^r}(B) - \widehat{SV}_{T_{\varepsilon}^r}(B) \right\vert &= \left\vert \frac{\varepsilon}{12} - \frac{1}{12 \varepsilon} \right\vert \overset{\varepsilon \rightarrow 0}{\longrightarrow} \, \infty, \\
	\left\vert \widehat{SV}_{T_{\varepsilon}^r}(B) - \widetilde{SV}_{T_{\varepsilon}^r}(B) \right\vert &= \left\vert \frac{1}{12 \varepsilon} + \frac{5 \varepsilon}{12} \right\vert \overset{\varepsilon \rightarrow 0}{\longrightarrow} \, \infty.
	\end{align*}
Again, for $\varepsilon \rightarrow \, \infty$ all three differences tend to infinity. \\
In conclusion, this example shows that we can make the difference between any pair of different versions of the Shapley Value (original, modified and unrooted rooted) arbitrarily large (tending to infinity), but also arbitrarily small (tending to zero). 
\end{remark}

%Figure (trees with epsilon edges))
\begin{figure*}[htbp]
	\centering
	\includegraphics[scale=0.7]{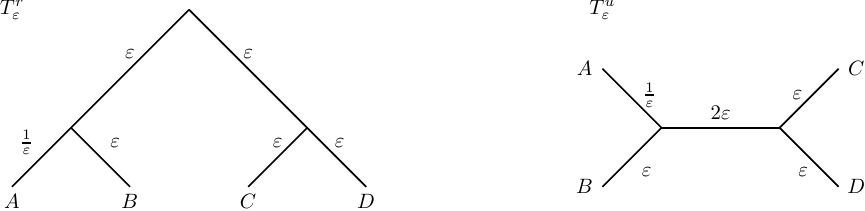}
	\caption{Rooted tree $T_{\varepsilon}^r$ with leaf set $X=\{A,B,C,D\}$ and its unrooted version $T_{\varepsilon}^u$.}
  \label{fig:2}
\end{figure*}

\section{Comparing the rankings induced by the different indices}
\label{sec:2}
Both the original and modified Shapley Value and the unrooted Shapley Value on rooted trees and the original Shapley Value are highly correlated, which was shown in \citep{Fuchs2015} and \citep{Hartmann2013}, respectively. But even if the correlation index goes to 1 as the number of taxa goes to infinity, the indices can still result in different ranking orders of the taxa. 

To illustrate this effect, we considered the phylogenetic tree for European amphibian species presented in \citep{Zupan2014} (available on TreeBASE, accession number: S13561).
We calculated the original Shapley Values, the modified Shapley Values and the unrooted rooted Shapley Values for all $105$ species present in the tree and compared the resulting rankings (see supporting information S1 Text). The rankings are similar, but they are not identical. Consider for example the species \textit{Hyla sarda} and \textit{Bombina bombina}. While the ranking obtained from the original Shapley Value places \textit{Hyla sarda} before \textit{Bombina bombina}, the modified Shapley Value and the unrooted rooted Shapley Value rank \textit{Bombina bombina} higher than \textit{Hyla sarda}. Other discrepancies between the rankings can be found easily. However, all rankings show a similar overall tendency in ranking the species and thus, the distance between the different rankings is low. We have, $d_1^{\ast}(r_{SV},r_{\widetilde{SV}}) \approx 0.01959$ and $d_1^{\ast}(r_{SV},r_{\widehat{SV}}) \approx 0.02612$.
%A full overview of all Shapley values is provided in the supporting information (S1 Text).

In the following we will undertake further statistical analyses to explore this effect and first compare the rankings obtained from the original and modified Shapley Value and afterwards the rankings obtained from the original and unrooted Shapley Value on rooted trees.

For each analysis we have generated some random trees (the details of which will be explained in the subsequent sections). We then calculated the rankings obtained from the different versions of the Shapley Value and counted the number of cases where the rankings are identical. In a subsequent step we calculated the normalized Manhattan distance between the different rankings.

\begin{example}
Consider the rooted tree $T^r$ depicted in Figure \ref{fig:3} and the different versions of the Shapley Value for its taxa as listed in the corresponding table. \\
We obtain the rankings $r_{SV}(A,B,C,D)=(1,3,1,4), \, r_{\widetilde{SV}}(A,B,C,D)=(1,4,2,3)$ and $r_{\widehat{SV}}(A,B,C,D)=(2,4,3,1)$. 
The maximal Manhattan distance between two vectors $r',s'$ of length 4, containing the elements $\{ 1,2,3,4 \}$ is $d_1(r',s')=8$. 
Thus, we have $d_1^*(r_{SV},r_{\widetilde{SV}}) = \frac{3}{8}$, $d_1^*(r_{SV},r_{\widehat{SV}})= \frac{7}{8}$.
\end{example}

%Figure illustrating the possible different ranking order
\begin{figure*}[htbp]
	\centering
	\subfloat[Non-ultrametric rooted tree $T^r$]{\includegraphics[width=0.5\textwidth]{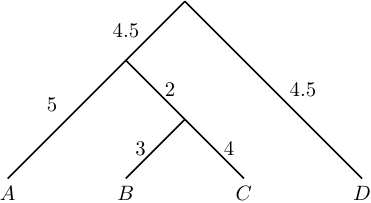} \label{subfig3}}
	\hfil
  	\subfloat[Different versions of the Shapley Value for tree $T^r$]{
  		\begin{tabular}[b]{llll}
  		\hline\noalign{\smallskip}
		 taxon & $SV$ & $\widetilde{SV}$ & $\widehat{SV}$ \\
		\noalign{\smallskip}\hline\noalign{\smallskip}
		 $A$ & 6.5 & 4.125 & 5.58$\overline{3}$ \\
		 $B$ & 5.5 & 3.125 & 4.25 \\
		 $C$ & 6.5 & 3.875 & 4.91$\overline{6}$ \\
		 $D$ & 4.5 & 3.375 & 8.25\\
		\noalign{\smallskip}\hline
		\end{tabular}
		\label{subtable3}}
  \caption{(a) Rooted tree $T^r$ with leaf set $X=\{A,B,C,D\}$ and (b) its different Shapley Values}
  \label{fig:3}
\end{figure*}

We are now in a position to compare the different versions of the Shapley Value.

\subsection{Original vs. modified Shapley Value}
\label{subsec:2.1}
We now compare the ranking order of taxa obtained from the original and the modified Shapley Value. Recall that $SV_{T^r}(a)$ and $\widetilde{SV}_{T^r}(a)$ differ only by the summand $\frac{PD^r(a)}{n}$ (see Equation \eqref{eq:1}). 
Therefore, note that the difference between the two versions of the Shapley Value can be of any size. In particular, if we choose the branch lengths of $T^r$ so long  that $PD^r(a) > n$  for all $a \in X$, the difference between $SV_{T^r}(a)$ and $\widetilde{SV}_{T^r}(a)$ will increase as $n$ increases.

But on the other hand, it can easily be seen that the values are highly correlated, because if $\widetilde{SV}_{T^r}(a)$ gets larger, so does $SV_{T^r}(a)$ (by Equation \eqref{eq:1}). So the values can be made arbitrarily different, but the correlation necessarily affects the ranking of the different taxa. This motivates our analysis of the rankings obtained from the two values instead of the values themselves. We will see in our simulation study that the rankings grow more and more alike with an increasing tree size. 

However, before we present our simulation result, we state the following simple proposition which makes such a simulation unnecessary for ultrametric trees.

\begin{proposition} \label{ultrametric} Let $T^r$ be a rooted binary phylogenetic tree on taxon set $X$. If $T^r$ is ultrametric, then the rankings implied by $SV_{T^r}(a)$ and $\widetilde{SV}_{T^r}(a)$ are identical.
\end{proposition}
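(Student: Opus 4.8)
The plan is to reduce everything to Equation \eqref{eq:1}, which already tells us that
\[
SV_{T^r}(a) = \widetilde{SV}_{T^r}(a) + \frac{PD^r(a)}{n}
\]
for every taxon $a \in X$. My whole strategy rests on showing that the extra summand $\frac{PD^r(a)}{n}$ does not in fact depend on $a$ when $T^r$ is ultrametric, so that the two indices differ only by an additive constant — and an additive constant cannot change the order in which the values are arranged.

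First I would unwind the definition of $PD^r(\{a\})$ for a single taxon. By the definition of rooted phylogenetic diversity, $PD^r(\{a\})$ is the sum of the edge lengths on the smallest spanning subtree of $T^r$ containing $a$ and the root $\rho$; for a single leaf this is exactly the total length of the path from $a$ to $\rho$. Next I would invoke the definition of ultrametricity: in an ultrametric tree the path length from every leaf to the root is one and the same value, say $h$. Hence $PD^r(\{a\}) = h$ for all $a \in X$, so that $\frac{PD^r(a)}{n} = \frac{h}{n} =: c$ is a constant independent of the taxon.

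Substituting back, I obtain $SV_{T^r}(a) = \widetilde{SV}_{T^r}(a) + c$ for all $a$, with $c = \frac{h}{n}$ fixed. It then remains to argue that adding a constant to a ranking function leaves the induced ranking unchanged. This is immediate from the definition of the induced ranking $r_f$: for any two taxa $x, y \in X$ we have $SV_{T^r}(x) \geq SV_{T^r}(y)$ if and only if $\widetilde{SV}_{T^r}(x) + c \geq \widetilde{SV}_{T^r}(y) + c$, i.e. if and only if $\widetilde{SV}_{T^r}(x) \geq \widetilde{SV}_{T^r}(y)$. Since this equivalence preserves both strict inequalities and equalities, ties are preserved as well, and therefore the rankings induced by $SV_{T^r}$ and $\widetilde{SV}_{T^r}$ coincide.

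There is essentially no hard step here; the content of the statement is front-loaded into Equation \eqref{eq:1} together with the definition of ultrametricity. If anything, the only point that needs a moment's care is the single-taxon case of $PD^r$: one must use the \emph{rooted} definition (the path length from the leaf to the root) rather than the unrooted convention $PD^u(\{a\}) = 0$, since it is precisely the non-vanishing but \emph{constant} root-to-leaf distance that makes the argument go through.
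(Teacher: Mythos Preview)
Your proposal is correct and follows essentially the same argument as the paper: both use Equation~\eqref{eq:1}, observe that ultrametricity forces $PD^r(\{a\})$ to equal the common root-to-leaf distance for every taxon, and conclude that the two indices differ by a fixed additive constant, which leaves the induced ranking unchanged. Your write-up is somewhat more explicit (spelling out the equivalence $SV_{T^r}(x)\geq SV_{T^r}(y)\Leftrightarrow \widetilde{SV}_{T^r}(x)\geq \widetilde{SV}_{T^r}(y)$ and the caveat about the rooted versus unrooted $PD$ of a singleton), but the underlying idea is identical.
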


\begin{proof} In ultrametric trees, all paths from the root to a leaf are of the same length, say $k$. This implies $PD(a_1) = \ldots =  PD(a_n)=k$ for all leaves $a_1, \ldots, a_n$. Using Equation \eqref{eq:1}, this leads to $SV_{T^r}(a) = \widetilde{SV}_{T^r}(a) + \frac{k}{n}$  for all $a \in X$. Thus, the original and the modified Shapley Values still differ in ultrametric trees, but the difference is the same for all taxa and therefore the ranking order obtained from either of the two indices will be the same.
\end{proof}

So for ultrametric trees, Proposition \ref{ultrametric} shows the equality between the rankings. Therefore, our subsequent analysis is only concerned with non-ultrametric trees. In particular, we want to find out how quickly (relative to the number $n$ of taxa) the rankings of the two indices  $SV_{T^r}(a)$ and $ \widetilde{SV}_{T^r}(a)$ on average coincide for increasing $n$. 

For this analysis, we used \textsf{R} \citep{R}, to be precise the packages ape \citep{ape}, geiger \citep{geiger} and phytools \citep{phytools} to generate a set of 100 random trees for each $n = 10, 20, \ldots, 100$. The trees were generated under a birth-death model with birth rate $\mu=1$ and death rate $\nu=1$. Note that even when a birth-death model is considered, it does not make much sense to consider some of the lineages as extinct, as biodiversity conservation can only aim at present-day species. So in this case, the birth-death model is only used to simulate a suitable tree shape, but not to represent the evolutionary history of the $n$ species under investigation. We always consider all $n$ lineages as extant, and the branch lengths are some measure of difference between the species. A second analysis where we generated random trees based on a uniform distribution of all possible rooted binary tree topologies on $n$ taxa can be found in the supporting information (S1 Text).

After generating the set of 100 random trees for each $n$, we calculated $SV_{T^r}(a)$ and $\widetilde{SV}_{T^r}(a)$ for all $a \in X$. Then we inferred the induced rankings of $SV_{T^r}$ and $\widetilde{SV}_{T^r}$ and calculated the normalized Manhattan distance (see supporting information (S1 Text) for further comparisons with the Kendall tau distance). These values were then summarized in the boxplots depicted in Figure \ref{fig:4} (all Figures were generated using \citep{MATLAB}. On the $x$-axis, we denoted both the number $n$ of taxa as well as the absolute counts (in brackets) of rankings which were identical.

The number of identical rankings decreased from 20 for 10 taxa down to 0 for 30 or more taxa.
Thus, the rankings obtained from the two versions of the Shapley Value tend to differ when the tree becomes large despite the high correlation. 

However, the amount of dissimilarity between the rankings decreases with an increasing tree size, as both the variability of the obtained distances within a tree set of fixed size (as depicted by the boxplots Figure \ref{fig:4}) and the mean distance (see Figure \ref{fig:4}) decrease with a growing number of taxa.

In summary we can say that the original and modified Shapley Value are more likely to result in an identical ranking order for small trees than for large trees. However, when they do not result in exactly the same ranking, the rankings are on average `more similar' for large trees than for small trees (in terms of the normalized Manhattan distance).

%Figure Random Trees and birth death trees (ManhattanDistance)
\begin{figure*}[htbp]
	\centering
		\includegraphics[width=0.6\textwidth]{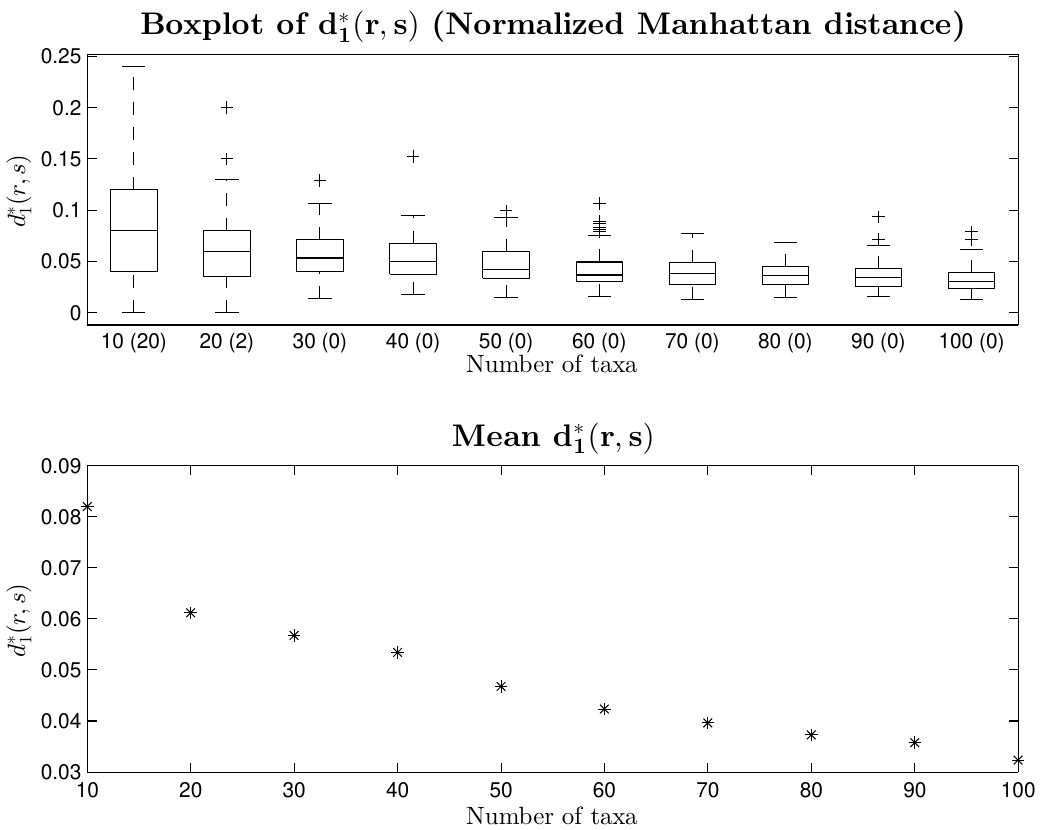}
	\caption{Boxplots and plot of the mean of the normalized Manhattan distance between rankings obtained from the modified and the original Shapley Value for 100 random birth-death trees with $\mu = \nu = 1$ of size $n= 10, 20, \ldots, 100$, respectively. The numbers in round brackets in the boxplots denote the number of identical rankings, e.g., in case of the birth-death trees with 10 taxa, we counted 20 identical and 80 dissimilar rankings.}
	\label{fig:4}
\end{figure*}

\subsection{Original Shapley Value vs. unrooted rooted Shapley Value}  
\label{subsec:2.2}
We now compare the Original Shapley Value and the unrooted rooted Shapley Value. In this case, we have to consider both ultrametric and non-ultrametric trees, because in both cases the ranking order may differ.
In the non-ultrametric case we used the set of birth-death trees from above for the analysis, while we generated a set of random trees under a Yule-model (i.e., a pure birth process) for the ultrametric case.

Again, we counted the number of identical rankings and calculated the normalized Manhattan distance (see Figure \ref{fig:5}).

We first notice that the number of cases where the rankings obtained from the original and unrooted rooted Shapley Value are identical is higher for Yule trees (ultrametric) than for birth-death trees (non-ultrametric). 
Similarly, the variability of both the normalized Manhattan distance itself, is smaller in Yule trees than in birth-death trees. Still, it decreases with an increasing number of taxa.
Notice, however, the difference in the scaling of the $y$-axis for Yule trees and birth-death trees. In all cases, the mean normalized Manhattan distance between rankings obtained from the original Shapley Value and rankings obtained from the unrooted rooted Shapley Value are approximately ten times smaller in Yule trees than in birth-death trees.

In summary we can say that the number of times where we obtain identical rankings from the two versions of the Shapley Value decreases with a growing number of taxa. At the same time, the normalized distances between the rankings decrease. Thus, in those cases where the rankings are not exactly identical, they tend to be `more similar' for large trees than for small trees. This reflects the effect which we have already observed for the rankings obtained from the modified and the original Shapley Value.
However, the impact of using the unrooted rooted Shapley Value as opposed to the original Shapley Value is higher for non-ultrametric trees than for ultrametric trees. The two indices can result in different rankings in both cases, but they are less different in the ultrametric case. Remember that the original and modified Shapley Value always lead to the same ranking order of taxa for ultrametric trees, which the original and unrooted rooted Shapley Value may not do. Still, ultrametricity seems to imply that it is less important which version of the Shapley Value we use to obtain a ranking of taxa.

%Figure Birth-death trees and Yule-trees (Manhattan Distance)
\begin{figure*}[htbp]
	\centering
	\subfloat[Random Yule trees (ultrametric) ]{
		\includegraphics[width=0.6\textwidth]{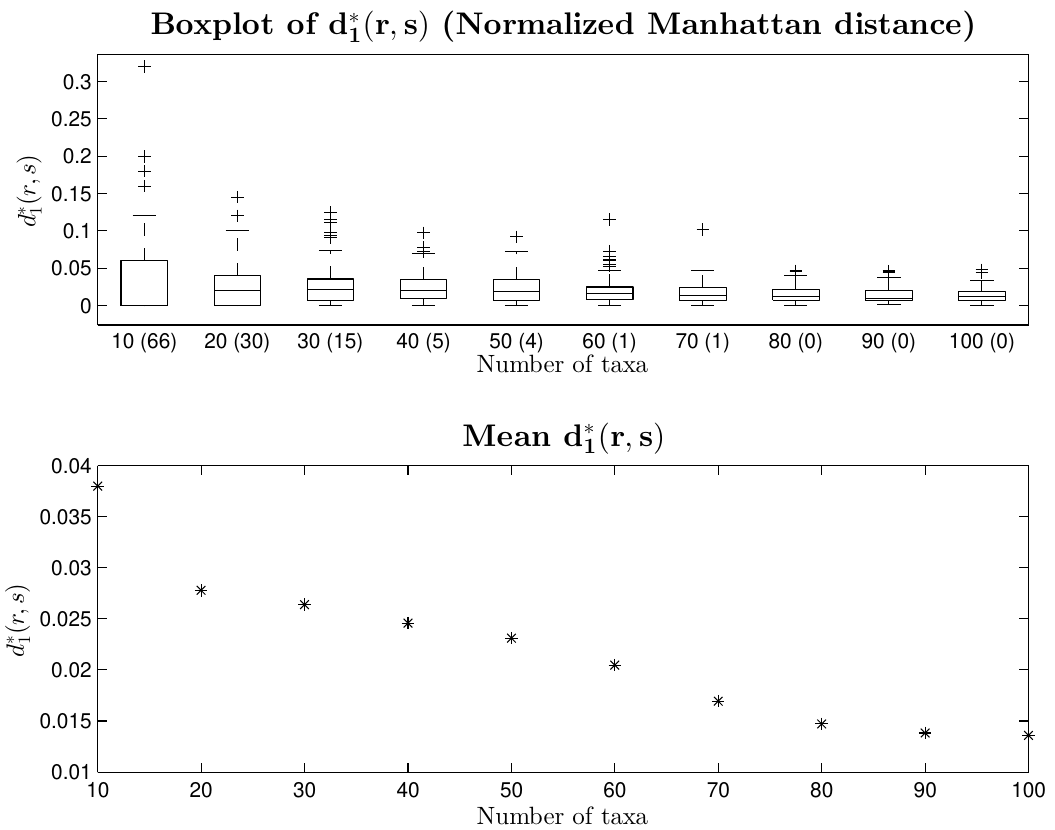}
		\label{subfig:5.1}}
	\vfill
	\subfloat[Random birth-death trees with $\mu$ = $\nu$ = 1 (non-ultrametric)]{
		\includegraphics[width=0.6\textwidth]{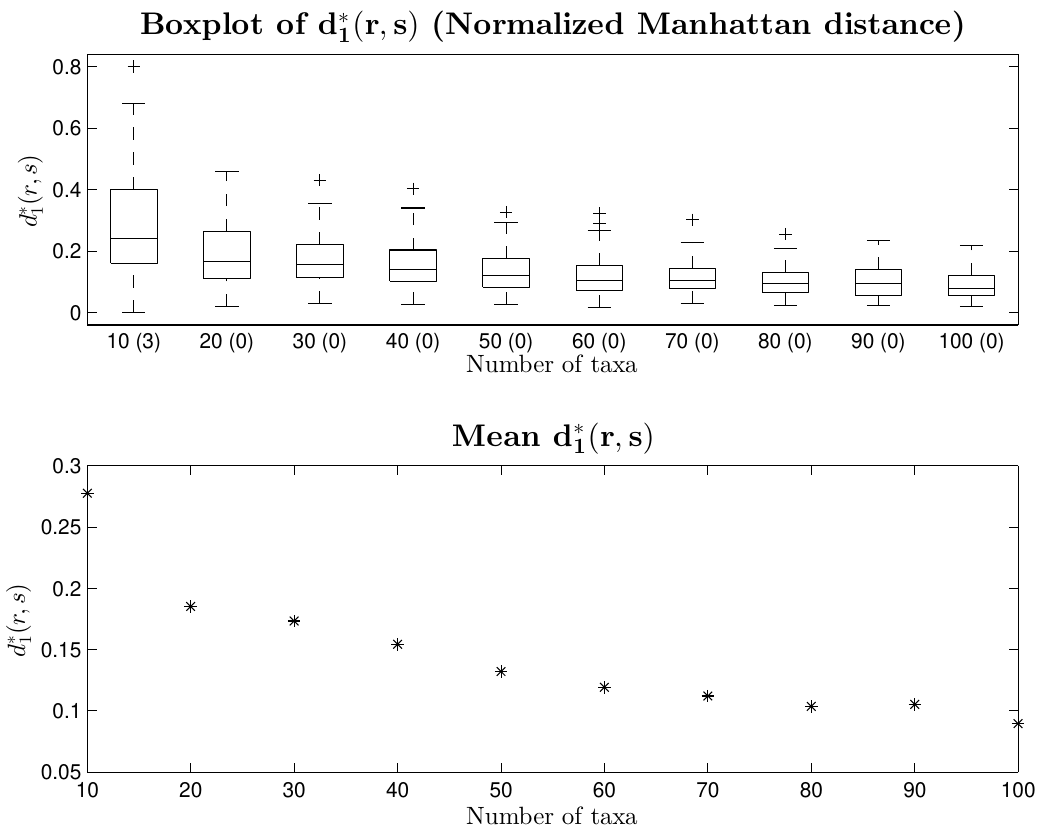}
		\label{subfig:5.2}}
	\caption{Boxplots and plot of the mean of the normalized Manhattan distance between rankings obtained from the original Shapley Value and the unrooted Shapley Value on rooted trees for 100 trees of size $n= 10, 20, \ldots,100$, respectively. The numbers in round brackets in the boxplots denote the number of identical rankings, e.g., in case of the birth-death trees with 10 taxa, we counted 3 identical and 97 dissimilar rankings.}
	\label{fig:5}
\end{figure*}

\section{Software}
\label{sec:3}
In order to calculate the different versions of the Shapley Value, we developed a program called FairShapley, which is available from \\ http://www.mareikefischer.de/Software/FairShapley.zip. A readme file containing a short manual is available at http://www.mareikefischer.de/Software/ FairShapleyREADME.txt. The tool is written in the programming language Perl and uses modules from BioPerl \citep{Bioperl}.
In contrast to existing tools for the Fair Proportion Index and the Shapley Value \citep{Vos2011}, which use the unrooted version of the Shapley Value (regardless of the tree being rooted or not) and arbitrarily root unrooted trees in case of the Fair Proportion Index, our program explicitly distinguishes between rooted and unrooted trees. It allows for the computation of both the original Shapley Value (which coincides with the Fair Proportion Index if the tree is rooted, as we pointed out earlier), the modified Shapley Value and the unrooted Shapley Value on rooted trees. 
The program takes trees represented in the so-called Newick format (cf. \citet{newick}) as an input. This format uses brackets, and two closely related species are grouped closely together. Moreover, a binary tree has two entries at each bracket level -- except for unrooted trees, which have three entries at the uppermost level. For example, the trees in Figure \ref{fig:1} can be denoted by $T^r=(((A:1,B:1),C:2):1,D:3)$ and $T^u=((A:1,B:1):1,C:2,D:4)$. Note that the numbers denote the edge lengths, e.g., $A:1$ means that the edge leading to leaf $A$ has length $1$. The program outputs the ranking order of taxa obtained from any version of the Shapley Value.

\section{Conclusions}
\label{sec:4}
In this paper, we summarized the different versions of the Shapley Value which can be found in the literature and which have different relationships with one another as well as with the frequently used Fair Proportion Index. We also showed that even though the different definitions are all highly correlated, the rankings they induce are hardly ever identical when the number $n$ of taxa under investigation is large. But the difference between the different rankings converges to zero as $n$ grows. 
We have chosen the Manhattan distance to measure the difference between the rankings obtained from different biodiversity indices, because the Manhattan distance can deal with ties that may occur in the rankings. Additionally, we have used the Kendall tau distance, which is aiming at rankings, but cannot deal with ties in a second analysis (see supporting information (S1 Text)).

In total, we have seen that both investigated distance measures lead to similar results. We have also seen that ultrametricity of the underlying tree makes the rankings of  all indices more similar -- in the case of the original and the modified Shapley Values, we even get equality for ultrametric trees. However, for non-ultrametric trees, the rankings tend to differ more, and even though the different versions of the Shapley Value are highly correlated, surprisingly the probability of getting two identical rankings from these different versions for a given tree decreases as the number $n$ of taxa increases. Yet the normalized differences tend to zero, and the variability of the differences between the values gets smaller for increasing $n$. 

We conclude that all biodiversity prioritization indices discussed in current literature, namely the Fair Proportion Index as well as all versions of the Shapely Value, tend to give similar results, particularly if the number of species under consideration is large. As the Fair Proportion Index can be calculated most easily, we therefore think its wide use is justified. However, as the probability of getting identical rankings from different values is small for a large species set, we suggest that for biological data and real conservation decisions, more than one index should be taken into account. 

It would be of interest to see if the differences in the rankings, which we observed for different kinds of random trees and the Amphibian dataset \citep{Zupan2014}, also occur frequently on other phylogenetic trees reconstructed from real data. This is another possible area for future research.

\section*{Supporting Information}
\textbf{S1 Text. Supporting information file that contains all versions of the Shapley Value for the European amphibian tree presented in \citep{Zupan2014}, and additional results of the simulation study.}

\section*{Acknowledgements}
\noindent
We want to thank an anonymous reviewer and Volkmar Liebscher for helpful comments on an earlier version of the manuscript.

\section*{References}
\bibliographystyle{model1-num-names}\biboptions{authoryear}
\bibliography{Sources}   % name your BibTeX data base

\begin{thebibliography}{14}
\expandafter\ifx\csname natexlab\endcsname\relax\def\natexlab#1{#1}\fi
\providecommand{\url}[1]{\texttt{#1}}
\providecommand{\href}[2]{#2}
\providecommand{\path}[1]{#1}
\providecommand{\DOIprefix}{doi:}
\providecommand{\ArXivprefix}{arXiv:}
\providecommand{\URLprefix}{URL: }
\providecommand{\Pubmedprefix}{pmid:}
\providecommand{\doi}[1]{\href{http://dx.doi.org/#1}{\path{#1}}}
\providecommand{\Pubmed}[1]{\href{pmid:#1}{\path{#1}}}
\providecommand{\bibinfo}[2]{#2}
\ifx\xfnm\relax \def\xfnm[#1]{\unskip,\space#1}\fi
%Type = Article
\bibitem[{Haake et~al.(2007)Haake, Kashiwada, and Su}]{Haake_2007}
\bibinfo{author}{C.-J. Haake}, \bibinfo{author}{A.~Kashiwada},
  \bibinfo{author}{F.~E. Su},
\newblock \bibinfo{title}{The {S}hapley value of phylogenetic trees},
\newblock \bibinfo{journal}{J. Math. Biol.} \bibinfo{volume}{56}
  (\bibinfo{year}{2007}) \bibinfo{pages}{479–497}.
%Type = Article
\bibitem[{Isaac et~al.(2007)Isaac, Turvey, Collen, Waterman, and
  Baillie}]{edge}
\bibinfo{author}{N.~J. Isaac}, \bibinfo{author}{S.~T. Turvey},
  \bibinfo{author}{B.~Collen}, \bibinfo{author}{C.~Waterman},
  \bibinfo{author}{J.~E. Baillie},
\newblock \bibinfo{title}{Mammals on the {EDGE}: Conservation priorities based
  on threat and phylogeny},
\newblock \bibinfo{journal}{PLoS ONE} \bibinfo{volume}{2}
  (\bibinfo{year}{2007}) \bibinfo{pages}{e296}.
%Type = Article
\bibitem[{Hartmann(2013)}]{Hartmann2013}
\bibinfo{author}{K.~Hartmann},
\newblock \bibinfo{title}{The equivalence of two phylogenetic biodiversity
  measures: the shapley value and fair proportion index.},
\newblock \bibinfo{journal}{J Math Biol} \bibinfo{volume}{67}
  (\bibinfo{year}{2013}) \bibinfo{pages}{1163--1170}.
%Type = Article
\bibitem[{Fuchs and Jin(2015)}]{Fuchs2015}
\bibinfo{author}{M.~Fuchs}, \bibinfo{author}{E.~Y. Jin},
\newblock \bibinfo{title}{Equality of shapley value and fair proportion index
  in phylogenetic trees.},
\newblock \bibinfo{journal}{J Math Biol} \bibinfo{volume}{71}
  (\bibinfo{year}{2015}) \bibinfo{pages}{1133--1147}.
%Type = Book
\bibitem[{Steel(2016)}]{Book_Steel}
\bibinfo{author}{M.~Steel}, \bibinfo{title}{Phylogeny: Discrete and Random
  Processes in Evolution (CBMS-NSF Regional Conference Series)},
  \bibinfo{publisher}{SIAM-Society for Industrial and Applied Mathematics},
  \bibinfo{year}{2016}. \bibinfo{note}{{ISBN} 978-1-611974-47-8}.
%Type = Article
\bibitem[{Zupan et~al.(2014)Zupan, Cabeza, Maiorano, Roquet, Devictor,
  Lavergne, Mouillot, Mouquet, Renaud, and Thuiller}]{Zupan2014}
\bibinfo{author}{L.~Zupan}, \bibinfo{author}{M.~Cabeza},
  \bibinfo{author}{L.~Maiorano}, \bibinfo{author}{C.~Roquet},
  \bibinfo{author}{V.~Devictor}, \bibinfo{author}{S.~Lavergne},
  \bibinfo{author}{D.~Mouillot}, \bibinfo{author}{N.~Mouquet},
  \bibinfo{author}{J.~Renaud}, \bibinfo{author}{W.~Thuiller},
\newblock \bibinfo{title}{Spatial mismatch of phylogenetic diversity across
  three vertebrate groups and protected areas in {Europe}},
\newblock \bibinfo{journal}{Diversity and Distributions} \bibinfo{volume}{20}
  (\bibinfo{year}{2014}) \bibinfo{pages}{674–685}.
%Type = Manual
\bibitem[{{R Development Core Team}(2008)}]{R}
\bibinfo{author}{{R Development Core Team}}, \bibinfo{title}{R: A Language and
  Environment for Statistical Computing}, \bibinfo{organization}{R Foundation
  for Statistical Computing}, \bibinfo{address}{Vienna, Austria},
  \bibinfo{year}{2008}. \URLprefix \url{http://www.R-project.org},
  \bibinfo{note}{{ISBN} 3-900051-07-0}.
%Type = Article
\bibitem[{Paradis et~al.(2004)Paradis, Claude, and Strimmer}]{ape}
\bibinfo{author}{E.~Paradis}, \bibinfo{author}{J.~Claude},
  \bibinfo{author}{K.~Strimmer},
\newblock \bibinfo{title}{A{PE}: analyses of phylogenetics and evolution in
  {R}language},
\newblock \bibinfo{journal}{Bioinformatics} \bibinfo{volume}{20}
  (\bibinfo{year}{2004}) \bibinfo{pages}{289--290}.
%Type = Article
\bibitem[{Harmon et~al.(2008)Harmon, Weir, Brock, Glor, and
  Challenger}]{geiger}
\bibinfo{author}{L.~Harmon}, \bibinfo{author}{J.~Weir},
  \bibinfo{author}{C.~Brock}, \bibinfo{author}{R.~Glor},
  \bibinfo{author}{W.~Challenger},
\newblock \bibinfo{title}{Geiger: investigating evolutionary radiations},
\newblock \bibinfo{journal}{Bioinformatics} \bibinfo{volume}{24}
  (\bibinfo{year}{2008}) \bibinfo{pages}{129--131}.
%Type = Article
\bibitem[{Revell(2012)}]{phytools}
\bibinfo{author}{L.~J. Revell},
\newblock \bibinfo{title}{phytools: An r package for phylogenetic comparative
  biology (and other things).},
\newblock \bibinfo{journal}{Methods in Ecology and Evolution}
  \bibinfo{volume}{3} (\bibinfo{year}{2012}) \bibinfo{pages}{217--223}.
%Type = Book
\bibitem[{MATLAB(2013)}]{MATLAB}
\bibinfo{author}{MATLAB}, \bibinfo{title}{version 8.1.0.604 (R2013a)},
  \bibinfo{publisher}{The MathWorks Inc.}, \bibinfo{address}{Natick,
  Massachusetts}, \bibinfo{year}{2013}.
%Type = Article
\bibitem[{Stajich(2002)}]{Bioperl}
\bibinfo{author}{J.~E. Stajich},
\newblock \bibinfo{title}{The bioperl toolkit: Perl modules for the life
  sciences},
\newblock \bibinfo{journal}{Genome Research} \bibinfo{volume}{12}
  (\bibinfo{year}{2002}) \bibinfo{pages}{1611–1618}.
%Type = Article
\bibitem[{Vos et~al.(2011)Vos, Caravas, Hartmann, Jensen, and Miller}]{Vos2011}
\bibinfo{author}{R.~A. Vos}, \bibinfo{author}{J.~Caravas},
  \bibinfo{author}{K.~Hartmann}, \bibinfo{author}{M.~A. Jensen},
  \bibinfo{author}{C.~Miller},
\newblock \bibinfo{title}{Bio:: Phylo-phyloinformatic analysis using perl},
\newblock \bibinfo{journal}{BMC bioinformatics} \bibinfo{volume}{12}
  (\bibinfo{year}{2011}) \bibinfo{pages}{63}.
%Type = Misc
\bibitem[{Felsenstein et~al.(2000)Felsenstein, Archie, Day, Maddison, Meacham,
  Rohlf, and Swofford}]{newick}
\bibinfo{author}{J.~Felsenstein}, \bibinfo{author}{J.~Archie},
  \bibinfo{author}{W.~Day}, \bibinfo{author}{W.~Maddison},
  \bibinfo{author}{C.~Meacham}, \bibinfo{author}{F.~Rohlf},
  \bibinfo{author}{D.~Swofford}, \bibinfo{title}{The newick tree format.},
  \bibinfo{year}{2000}. \URLprefix
  \url{http://evolution.genetics.washington.edu/phylip/newicktree.html}.

\end{thebibliography}

\end{document}